\newtheorem{dfn}[]{Definition}
\newtheorem{thm}{Theorem}[section]
\newtheorem{lem}[thm]{Lemma}
\newtheorem{cor}[thm]{Corollary}
\newcommand{\sat}{\mathrm{sat}}
\newcommand{\eps}{\epsilon}
\newcommand{\csi}{\xi}
\newcommand{\sect}{\mathrm{sect}}
\newcommand{\sgn}{\mathrm{sgn}}
\title{\LARGE \bf A Less Conservative Circle Criterion}
\author{D.~Materassi, M.~Salapaka, M.~Basso
        \thanks{Donatello Materassi and Michele Basso are with Dipartimento di Sistemi e Informatica, Universit\`{a} di Firenze (materassi@control.dsi.unifi.it)}
        \thanks{Murti Salapaka is with Department of Electrical and Computer Engineering, Iowa State University (murti@iastate.edu)}
        }
\begin{document}
\maketitle

\begin{abstract}
A weak form of the Circle Criterion for Lur'e systems is stated.
The result allows prove global boundedness of all system solutions.
Moreover such a result can be employed to enlarge the set of
nonlinearities for which the standard Circle Criterion 
can guarantee absolute stability.
\end{abstract}

\begin{section}{Introduction}
Although linear models can be successfully employed to locally
describe the behaviour of a physical system, they often fail to
provide a satisfactory global characterization \cite{NLDynConSys}.
An important class of nonlinear models is given by the feedback
interconnection of a SISO linear time-invariant system $G(s)$ 
and a nonlinear, possibly time-varying, static block $N$ as
represented in Figure \ref{Lur'eSystem}.
   \begin{figure}[hb]
     \begin{center}
       \psfrag{G}{ }
       \psfrag{e}{ }
       \psfrag{iwt}{ }
       \psfrag{L}{ }
       \psfrag{N}{N}
       \psfrag{(s)}{$G(s)$}
       \psfrag{x(t)}{$y(t)$}
       \includegraphics[width=0.4\textwidth]{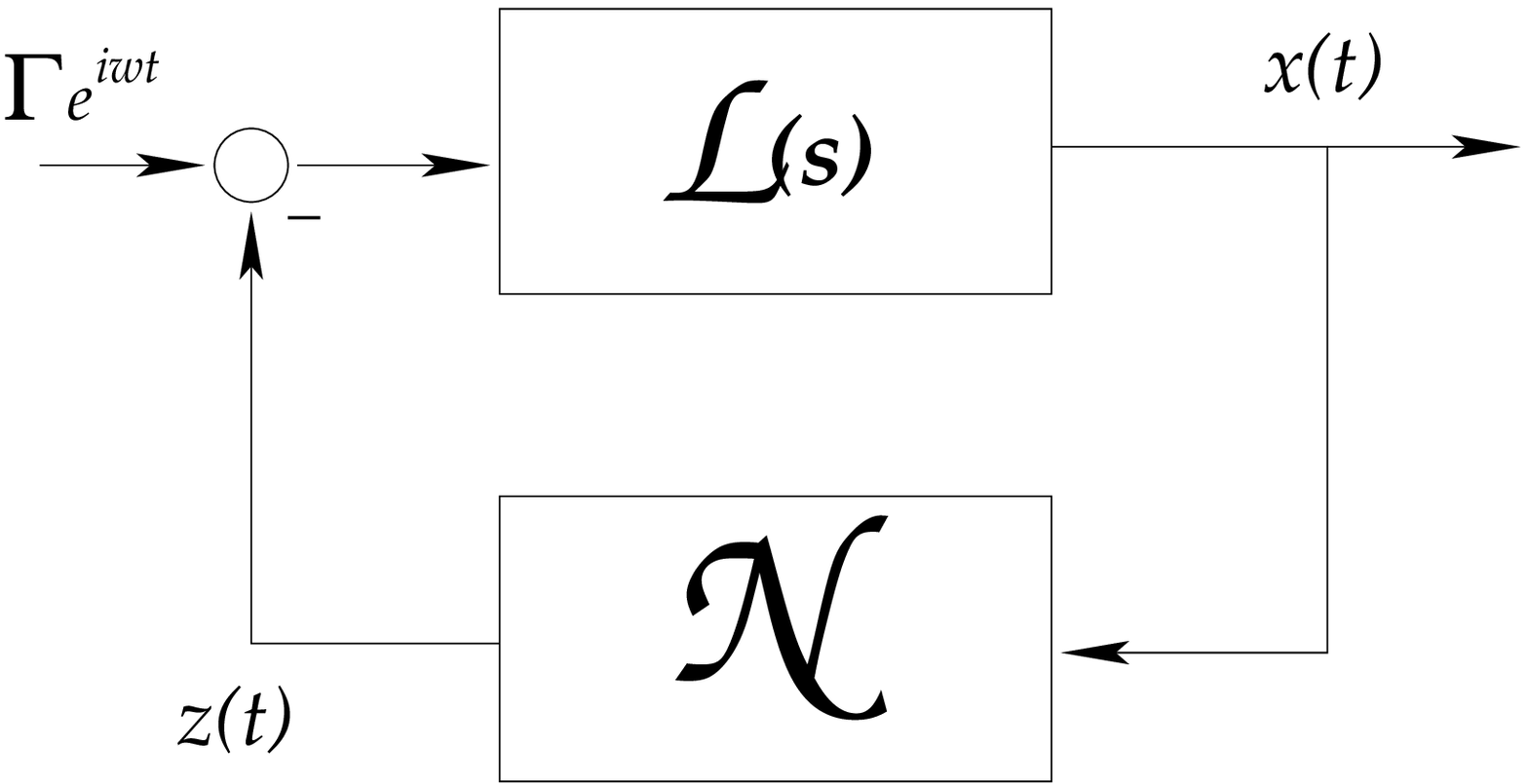}
     \end{center}
     \caption{Block diagram of a Lur'e system\label{Lur'eSystem}}
   \end{figure}
Models of this kind are known in literature as Lur'e Systems
\cite{Khalil} and their properties have been extensively studied in
the last few decades. They reveal to be a very ``expressive'' class
of models, since they can exhibit a wide variety of behaviours that
linear systems cannot show (for example the presence of isolate
equilibria, limit cycles, quasi-periodic and chaotic attractors
\cite{Genesio Quad} .\\
From this perspective (and considering also their relative
simplicity), Lur'e systems
can be considered a very attractive class to try to give a more
accurate and complete description of a physical system. 
\cite{On Describing}\\
Such a consideration shows the importance of having some
analytical tools allowing us to investigate qualitatively and
quantitatively their properties.\\
One of the significant results for Lur'e systems is  the Circle
Criterion \cite{Khalil}, \cite{Vidyasagar} which may be employed
to estabilish global asymptotical stability of the origin when the nonlinear feedback block satisfies a sector
   condition.\\
   In this paper, we first provide a relaxed form of the Circle Criterion which
   guarantees only a quantitative bound for all solutions after a finite time, but under
   conditions easier to fulfill.
   This result can be considered an useful analysis tool by itself. It can be
   employed to find quantitative bounds for any attractor in the system.\\
   % In    addition, it also admits a direct graphical interpretation.
   We also exploit this tool to develop a criterion to estabilish Absolute
   Stability for a set of nonlinearity wider than the simple
   sector class taken into account by the Standard Circle Criterion.
   The paper is organized as follows. Section \ref{Main Results} contains all the main
   theoretical results, Section \ref{A Numerical Example} gives a numerical example of
   the developed techniques and finally in Section \ref{Conclusions} main results are briefly
   summarized.
\end{section}

\begin{section}{Main Results}\label{Main Results}
   Let us consider the feedback interconnection of a SISO linear block with transfer
   function $G(s)$ and a nonlinear static block $N$ defined by a possibly time-varying
   function $n(t,\cdot)$. 
   We assume that $n(t,\cdot)$ is regular enough to guarantee the uniqueness and the
   existence of the solutions.
   Let us call $u(t)$ and $y(t)$ respectively the input and the output
   signals of the linear subsystem. Since we are considering an unforced system,
   the relation
   $u(t)=-n(t,y(t))$ holds.\\
%   We are interested in finding some conditions under which the system admits  
\noindent For sake of completeness we present some definitions.
\begin{dfn}[Positive Real Rational Function]
   A proper rational function $G(s)$ is called Positive Real if it satisfies the 
   following conditions
   \begin{itemize}
      \item all poles of $G(s)$ are in $\{s~|~Re[s]\leq 0\}$
      \item if $i\omega$ $(\omega\in\mathrm{R})$ is not a pole of $G(s)$ then
                                           $Re[G(i\omega)]\geq 0$
      \item if $i\omega$ $(\omega\in\mathrm{R})$ is a pole of $G(s)$ then it is a simple
       pole with positive residual.
   \end{itemize}
\end{dfn}

\begin{dfn}
   A proper rational function $G(s)$ is called $\epsilon$-Positive Real if
   $\epsilon>0$ and $G(s-\epsilon)$ is Positive Real.
\end{dfn}

We first introduce a simple generalization of a well-known result in passivity theory.
\begin{lem}\label{FreqGenLemma}% [\emph{B}]
Let $S$ be a Lur'e system made of a SISO linear block with transfer function $G(s)$ and  nonlinear static block $n(t,y)$.
Let us assume that
%    \begin{itemize}
%        \item $G(S)$ is SPR
%        \item $n(t,y) \in \mathrm{sect}(0,+\infty) \qquad \forall |y|> \hat y \geq 0$  
%        \item $|n(t,y) y(t)| \leq M \qquad \forall |y|\leq \hat y$.  
%    \end{itemize}
   \begin{itemize}
%       \item $\exists~\epsilon>0$: $G(s-\epsilon)$ is Positive Real 
       \item $G(s)$ is $\epsilon$-Positive Real 
       \item $ \exists~M>0:\quad-M < n(t,y) y(t)$.  
   \end{itemize}
\noindent Let $(A, B, C, D)$ be any minimal realization of $G(s)$ with a state $x$. 
Let $(P, L, J)$ be the matrices satisfing the Kalman-Yakubovich-Popov (KYP) lemma 
conditions for the realization $(A, B, C, D)$. \\
Then, for all $\eta>0$ the state $x$ of the system $S$ reaches the region 
$\Omega_{2M/\epsilon + \eta}:=\{x~|~x^T P x \leq 2M/\epsilon +\eta \}$ in a finite time.
Also this set is positively invariant.
\end{lem}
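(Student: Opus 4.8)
The plan is to use the quadratic form $V(x) = x^T P x$ as a Lyapunov-like function and to show that it obeys a scalar linear differential inequality along the trajectories of $S$. Since the KYP matrix $P = P^T > 0$, the function $V$ is positive definite and radially unbounded, so every sublevel set $\Omega_c = \{x \mid x^T P x \le c\}$ is compact. Once the estimate on $\dot V$ is in hand, both the finite-time reaching property and the positive invariance will follow almost immediately.

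First I would write the closed-loop dynamics as $\dot x = Ax + Bu$, $y = Cx + Du$, with $u = -n(t,y)$, and differentiate, obtaining $\dot V = x^T(PA + A^T P)x + 2x^T P B\, u$. The core of the argument is to substitute the KYP identities. Because $G(s)$ is $\epsilon$-Positive Real, the triple $(P,L,J)$ is attached to the genuinely positive-real function $G(s-\epsilon)$, whose minimal realization is $(A+\epsilon I, B, C, D)$; hence these matrices satisfy $PA + A^T P = -L^T L - 2\epsilon P$, together with $PB = C^T - L^T J$ and $J^T J = D + D^T$. It is precisely the frequency shift that produces the indispensable damping term $-2\epsilon P$.

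Substituting these identities and using $Cx = y - Du$, the expression for $\dot V$ should collapse, after collecting the terms involving $L$ and $J$, into a perfect square plus the remaining contributions, namely $\dot V = -2\epsilon V - \|Lx + Ju\|^2 + 2yu$. Discarding the non-positive square and invoking the one-sided sector bound $yu = -n(t,y)\,y < M$ then yields the scalar inequality $\dot V \le -2\epsilon V + 2M$. The step I expect to be the main (indeed the only nontrivial) obstacle is this algebraic assembly: one must track the transposes and signs so that the cross term $2x^T P B\, u$, after replacing $PB$ by $C^T - L^T J$ and writing $2Du^2 = u^2 J^T J$, recombines exactly into $-\|Lx + Ju\|^2$. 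Everything downstream is routine.

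Finally I would invoke the comparison lemma. Integrating the majorant $\dot w = -2\epsilon w + 2M$ gives $V(x(t)) \le \frac{M}{\epsilon} + \left(V(x(0)) - \frac{M}{\epsilon}\right)e^{-2\epsilon t}$, whose limit lies strictly below the level $2M/\epsilon + \eta$. This bound rules out finite escape time, so solutions exist for all $t \ge 0$; moreover, for every $\eta > 0$ the value $V(x(t))$ crosses $2M/\epsilon + \eta$ in finite time (finiteness being guaranteed exactly because the exponential limit is strictly below the target, which is why the margin $\eta$ is needed). For positive invariance I would evaluate the estimate on the boundary $x^T P x = 2M/\epsilon + \eta$, where $\dot V \le -2\epsilon\,(2M/\epsilon + \eta) + 2M = -2M - 2\epsilon\eta < 0$, so the flow points strictly inward and no trajectory can leave $\Omega_{2M/\epsilon + \eta}$. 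This establishes both claims.
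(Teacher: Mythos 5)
Your proposal is correct and follows essentially the same route as the paper: substitute the KYP identities for the $\epsilon$-shifted system into $\dot V$ for $V=x^TPx$, complete the square in $Lx+Ju$, use the one-sided bound $-n(t,y)y<M$ to obtain $\dot V\le -2\epsilon V+2M$, and deduce finite-time entry and invariance of the sublevel set. The only (immaterial) difference is that you conclude via the explicit comparison-lemma exponential bound, whereas the paper argues by contradiction from monotonicity and the integrated inequality; both are standard and your constants are in fact cleaner than the paper's.
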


\begin{proof}
   As $(P, L, J)$ is a solution to the KYP problem, the following the relations are satisfied
   \begin{align}
       PA+A^TP &= - L^T L - \frac{\epsilon}{2} P \nonumber \\
       PB &= C^T - L^TJ \nonumber \\ 
        J&=\sqrt{2D}, \nonumber
   \end{align}
   with $P=P^T>0$.
   Consider the Lyapunov function $V(x)=\frac{1}{2}x^T P x$. We find that
   \begin{align}
     \dot V(x) &= -\frac{\epsilon}{2} V(x) 
                         -\frac{\epsilon}{2} (Lx +Ju)^T(Lx +Ju)  - n(t,y)y \leq \nonumber \\
               &\leq -\frac{\epsilon}{2} V(x)  - n(t,y)y <  M-\frac{\epsilon}{2} V(x).
   \end{align}
%   If $y>\hat y$ we have that $\dot V<0$ by the sector hypothesis.
   If $V(x)>\frac{2M}{\epsilon}$ then $\dot V$ is strictly negative.
   We use standard arguments to show that, if the initial condition $x(0)$ is
   outside $\Omega_{2M/\epsilon+\eta}$, then the state $x$ enters $\Omega_{2M/\epsilon+\eta}$
   in a finite time.
   Suppose  $x(t)$ remains outside $\Omega_{2M/\epsilon+\eta}$ for all $t$.
   Then, $V(x(t))\geq \frac{2M}{\eps} + \eta$ for all $t$.
   $V(x(t))$ is monotonically decreasing and bounded from below by $\frac{2M}{\epsilon}+\eta$.
   Thus, $V(x(t))$ converges to $\underbar V\geq \frac{2M}{\epsilon}+\eta>0$.
   Note that
   \begin{align}
        V(x(t))&=V(x(0))+\int_0^t \dot V(x(\tau)) \mathrm{d}\tau \leq\nonumber\\
	       &\leq V(x(0)) +Mt-\frac{\epsilon}{2}\underbar Vt\leq
	              V(x(0))-\frac{\epsilon~\eta}{2}t\rightarrow -\infty.\nonumber
   \end{align}
   Thus, we have a contradiction. We have concluded that for all $~\eta>0$ there exists a time
   $t_{\eta}$ such that $x(t_\eta)\in\Omega_{2M/\epsilon+\eta}$.
   It can be shown, exploiting the continuity of $V(x(t))$, that the solution $x(t)$
   cannot leave the region $\Omega_{2M/\epsilon+\eta}$ once it has entered it.
\end{proof}

% \newpage
   As a direct consequence of the previous lemma we have the following theorem 
   which can be interpreted as a weak form of the Circle Criterion.
 \begin{dfn}
   We introduce the notation $[G(s)\circlearrowright \alpha]$ to indicate
   the transfer function of the system defined by $G(s)$ with a linear static negative 
   feedback  gain $\alpha$
   \begin{equation}
       [G(s)\circlearrowright \alpha]:=\frac{G(s)}{1+\alpha G(s)}.
   \end{equation}
 \end{dfn}

\begin{thm}[Weak Circle Criterion]\label{WeakCircle}
   Let $S$ be a Lur'e system made of a SISO linear block with transfer function
   $G(s)$ and  nonlinear static block $n(t,y)$.
   Let $\alpha,\beta$ be two real scalars.
   Suppose
   \begin{equation}\label{G(s) is PR}
        [G(s)\circlearrowright \alpha] + \frac{1}{\beta-\alpha}
                                ~~is~~ \epsilon-Positive Real
   \end{equation}
   and there exists $M>0$ such that
   \begin{equation}\label{Hyp Sect Condition}
        \quad-M<\frac{1}{\beta - \alpha} (n(t,y)-\alpha y)
                                                               (\beta y - n(t,y))
   \end{equation}
%    \begin{enumerate}
%        \item $[G(s)\circlearrowright \alpha] + \frac{1}{\beta-\alpha}$ 
%                                 is $\epsilon$-Positive Real 
% 				(with same order of $G(s)$)
%        \item $ \exists~M>0:\quad-M<\frac{1}{\beta - \alpha} (n(t,y)-\alpha y)
%                                                                (\beta y - n(t,y))$.  
%    \end{enumerate}
%    \noindent (where we have introduced the notation $[[G(s)\circlearrowright \alpha]]$ 
%    to indicate
%    the transfer function of the system defined by $G(s)$ with a linear static negative 
%    feedback  gain $\alpha$).
   \noindent Let $(A, B, C, D)$ be any minimal realization of $G(s)$
    with a state $x$. 
   Let $(P, L, J)$ be a solution of the associated KYP problem associated to the matrices
   $(\hat A, \hat B, \hat C, \hat D):=(A-\alpha BC, B , C/(1+\alpha D), D/(1+\alpha D)+\frac{1}{\beta-\alpha})$. \\
   Then, for all $\eta>0$ the state $x$ of the system $S$ reaches the region 
   $\Omega_{2M/\epsilon+\eta}:=\{x~|~x^T P x < 2M/\epsilon+\eta  \}$ in a finite time
   without leaving it anymore.
\end{thm}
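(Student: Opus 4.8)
The plan is to reduce the statement to Lemma~\ref{FreqGenLemma} by means of the classical loop transformation that underlies the Circle Criterion. The idea is to build an auxiliary Lur'e system $\hat S$ that shares the very same state $x$ as $S$, whose linear block is exactly $\hat G(s):=[G(s)\circlearrowright\alpha]+\frac{1}{\beta-\alpha}$, and whose nonlinear block satisfies the simple passivity-type lower bound required by the lemma. Since $\hat S$ and $S$ have identical state trajectories, any conclusion drawn about $x$ for $\hat S$ transfers verbatim to $S$.

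First I would introduce the transformed signals
\[
   \hat n(t):=n(t,y)-\alpha y,\qquad \hat y:=\frac{\beta y-n(t,y)}{\beta-\alpha},
\]
and set $\hat u:=-\hat n=\alpha y+u$, recalling that $u=-n(t,y)$ in $S$. The central verification is that $(\hat A,\hat B,\hat C,\hat D)$ is a realization of $\hat G(s)$ and that, when driven by $\hat u$, it reproduces exactly the state dynamics of $S$ while producing the output $\hat y$. Concretely, using $y=Cx+Du$ and the well-posedness condition $1+\alpha D\neq 0$, a direct computation should give $\hat A x+\hat B\hat u=Ax+Bu=\dot x$ and $\hat C x+\hat D\hat u=\hat y$. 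This is the main bookkeeping step: one checks that the feedback shift of $A$, the rescaling of $C$ by $1+\alpha D$, and the added feedthrough $\tfrac{1}{\beta-\alpha}$ conspire to keep the state evolution unchanged (in the strictly proper case $D=0$ the entries reduce to the ones quoted in the statement, and the general case follows from the standard feedback-realization formulas). Thus $\hat S$ is a genuine Lur'e interconnection of $\hat G$ with the static block $\hat n$.

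Next I would verify that $\hat S$ meets the two hypotheses of Lemma~\ref{FreqGenLemma}. The first is immediate: by assumption~\eqref{G(s) is PR}, $\hat G(s)$ is $\epsilon$-Positive Real, so the matrices $(P,L,J)$ solving the KYP problem for $(\hat A,\hat B,\hat C,\hat D)$ exist. For the second, I would compute the product of the transformed nonlinearity and its output,
\[
   \hat n\,\hat y=\bigl(n(t,y)-\alpha y\bigr)\,\frac{\beta y-n(t,y)}{\beta-\alpha}
      =\frac{1}{\beta-\alpha}\bigl(n(t,y)-\alpha y\bigr)\bigl(\beta y-n(t,y)\bigr),
\]
which by assumption~\eqref{Hyp Sect Condition} is bounded below by $-M$. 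Note that this identity is a purely algebraic relation between the signals, valid regardless of the sign of $\beta-\alpha$, so I need not separately assume $\beta>\alpha$.

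Finally, applying Lemma~\ref{FreqGenLemma} to $\hat S$ yields that, for every $\eta>0$, the state enters the set $\Omega_{2M/\epsilon+\eta}=\{x\mid x^TPx<2M/\epsilon+\eta\}$ in finite time and never leaves it; since the state of $\hat S$ is the state of $S$, this is exactly the claim. I expect the only real obstacle to be the realization bookkeeping of the second step, and in particular getting the $D\neq0$ entries consistent. Two observations keep this clean: the transformation is a bijection on the instantaneous signals (so $\hat S$ inherits existence and uniqueness of solutions from $S$), and the proof of Lemma~\ref{FreqGenLemma} only uses the product bound $-M<\hat n\,\hat y$ \emph{along trajectories}, so it is not necessary to exhibit $\hat n$ as an explicit static function of $\hat y$.
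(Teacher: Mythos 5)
Your proposal is correct and follows essentially the same route as the paper: the loop transformation ($\alpha$-pole shift plus $\frac{1}{\beta-\alpha}$-zero shift) producing the auxiliary system $\hat S$ with linear part $[G(s)\circlearrowright\alpha]+\frac{1}{\beta-\alpha}$ and nonlinearity $\hat n = n(t,y)-\alpha y$, followed by an application of Lemma~\ref{FreqGenLemma}. Your write-up is in fact somewhat more careful than the paper's, since you make explicit the realization bookkeeping and the identity $\hat n\,\hat y=\frac{1}{\beta-\alpha}(n-\alpha y)(\beta y-n)$ that the paper only asserts.
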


\begin{proof}
   Let us apply sequentially to the system $S$ an $\alpha$-poleshift and a 
   $\frac{1}{\beta-\alpha}$-zeroshift as depicted in the Figure \ref{ZeroPoleShift}.
   We obtain a new Lur'e system $\hat S$ made of a linear system 
   $\hat G:=[[G(s)\circlearrowright \alpha]] + \frac{1}{\beta-\alpha}$
   with a nonlinear feedback $\hat n(t,y):= [n(t,y)-\alpha y]$.
   It is straightfoward to note that $(\hat A, \hat B, \hat C, \hat D)$ is a
   realization of $\hat G(s)$ and also that the new output $\hat y$ is equal to
   $\frac{1}{\beta-\alpha} (\beta y - n(t,y))$. Now we can apply the Lemma
   \ref{FreqGenLemma} to conclude that there exists $t_{\eta}$ such that the state $x(t)$
   is in the region $\Omega_{2M/\epsilon+\eta}$ for all $t>t_{\eta}$.
\end{proof}~\\
\begin{figure}
   \begin{center}
     \psfrag{a}{$G(s)$}
     \psfrag{b}{$n(t,\cdot)$}
     \psfrag{c}{$\alpha$}
     \psfrag{d}[][l]{$~\frac{1}{\beta-\alpha}$}
     \psfrag{e}{$\alpha$}
     \psfrag{f}[][l]{$~\frac{1}{\beta-\alpha}$}
      \psfrag{g}{$\hat u$}
      \psfrag{h}{$u$}
      \psfrag{i}{$y$}
      \psfrag{j}{$y$}
      \psfrag{l}{$\hat y$}
      \psfrag{m}{ }
      \psfrag{n}{$u$}
      \psfrag{o}{$y$}
      \psfrag{p}{$ $}
     \includegraphics[width=0.4\textwidth]{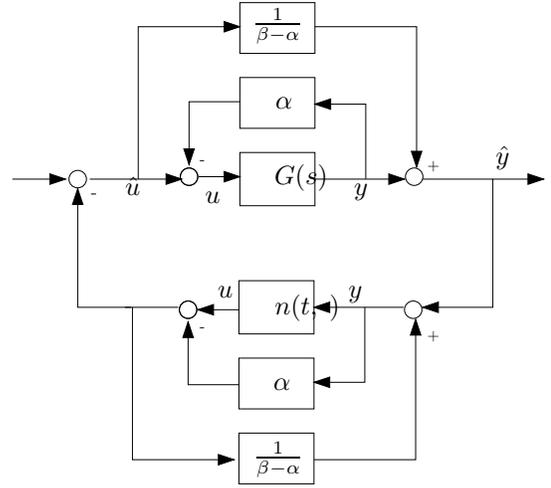}
   \end{center}
   \caption{Zero and Pole Shifting\label{ZeroPoleShift}}
\end{figure}
   The condition (\ref{G(s) is PR}) is a Circle Criterion Condition. 
   It has the following well-known characterization:
\begin{lem}\label{CircleCondition}
   There exists $\epsilon>0$ such that $[G(s)\circlearrowright \alpha] + \frac{1}{\beta-\alpha}$ is
   $\epsilon$-Positive Real if and only if,
      given on the complex plane the circle $\Gamma$ whose diameter is the segment 
      $\left[\min\left\{-\frac{1}{\alpha}, -\frac{1}{\beta} \right\},
        \max\left\{-\frac{1}{\alpha}, -\frac{1}{\beta} \right\} \right]$, we have that
   \begin{itemize}
      \item for $\alpha<\beta$ the inner part of $~\Gamma$ is contained 
             in the stability region of the Nyquist plot of $G(i\omega)$.
      \item for $\beta<\alpha$ the outer part of $~\Gamma$ is contained
             in the stability region of the Nyquist plot of $G(i\omega)$.
   \end{itemize}
\end{lem}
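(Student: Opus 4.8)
The plan is to show that the existence of $\epsilon>0$ making $H(s):=[G(s)\circlearrowright\alpha]+\frac{1}{\beta-\alpha}$ an $\epsilon$-Positive Real function is equivalent to two conditions, each of which carries a transparent geometric meaning for the Nyquist plot of $G$. First I would put $H$ in a convenient form:
\[
   H(s)=\frac{G(s)}{1+\alpha G(s)}+\frac{1}{\beta-\alpha}=\frac{1+\beta G(s)}{(\beta-\alpha)\,\bigl(1+\alpha G(s)\bigr)}.
\]
For a proper rational function the existence of such an $\epsilon$ is exactly strict positive realness, which is equivalent to the pair of conditions: (i) all poles of $H$ lie in the open left half plane, and (ii) $\mathrm{Re}[H(i\omega)]>0$ for every $\omega$ (with the usual proviso on the behaviour as $\omega\to\infty$). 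Each half of the claimed statement will correspond to one of these.

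The heart of the argument is condition (ii). Setting $g=G(i\omega)$ and multiplying the numerator and denominator of $H(i\omega)$ by $\overline{1+\alpha g}$, the real part becomes
\[
   \mathrm{Re}[H(i\omega)]=\frac{1+(\alpha+\beta)\,\mathrm{Re}[g]+\alpha\beta\,|g|^2}{(\beta-\alpha)\,|1+\alpha g|^2}.
\]
When $\alpha\beta\neq0$ I would complete the square in the numerator to obtain
\[
   1+(\alpha+\beta)\mathrm{Re}[g]+\alpha\beta|g|^2=\alpha\beta\left(\left|g+\tfrac12\left(\tfrac1\alpha+\tfrac1\beta\right)\right|^2-\tfrac14\left(\tfrac1\alpha-\tfrac1\beta\right)^2\right),
\]
whose bracket vanishes exactly on the circle centred at $-\tfrac12\bigl(\tfrac1\alpha+\tfrac1\beta\bigr)$ of radius $\tfrac12\bigl|\tfrac1\alpha-\tfrac1\beta\bigr|$; this is precisely the circle $\Gamma$ with diameter $\bigl[\min\{-1/\alpha,-1/\beta\},\max\{-1/\alpha,-1/\beta\}\bigr]$. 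Consequently the sign of $\mathrm{Re}[H(i\omega)]$ is that of $\frac{\alpha\beta}{\beta-\alpha}$ multiplied by the quantity that is positive outside $\Gamma$ and negative inside it, so condition (ii) states that $G(i\omega)$ remains strictly on one definite side of $\Gamma$ for all $\omega$. The degenerate cases $\alpha=0$ or $\beta=0$ are recovered by reading the same formula as a limit, in which $\Gamma$ becomes a vertical line and the two sides are half planes.

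It then remains to translate condition (i) and to pin down which side of $\Gamma$ is the correct one. Because $H$ and $[G(s)\circlearrowright\alpha]$ share their poles, condition (i) asserts that the static gain $\alpha$ stabilizes $G$, and more generally a point $p$ belongs to the stability region of the Nyquist plot exactly when the feedback gain $-1/p$ yields a stable closed loop, i.e. when $G(i\omega)$ winds around $p$ a number of times equal to the number of open right half plane poles of $G$. Since this winding number is constant on each connected component of the complement of the Nyquist curve, condition (ii) — which forbids the curve from meeting $\Gamma$ — means that a whole connected region bounded by $\Gamma$ carries the stabilizing winding count inherited from the point $-1/\alpha$ through condition (i). A case analysis on the signs of $\alpha$ and $\beta$, using the factor $\frac{\alpha\beta}{\beta-\alpha}$ found above, then identifies this region as the interior of $\Gamma$ when $\alpha<\beta$ and as its exterior when $\beta<\alpha$, which is exactly the stated dichotomy; reversing every implication yields the converse.

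The routine part is the real-part algebra, which collapses to the circle $\Gamma$ as soon as the square is completed. The step I expect to be the real obstacle is the rigorous bookkeeping that upgrades the pointwise condition ``$G(i\omega)$ avoids $\Gamma$'' together with the single analytic fact (i) into the global statement that the entire interior (or exterior) of $\Gamma$ lies in the stability region; this requires invoking the Nyquist criterion and the constancy of the winding number on connected components, and carrying the argument uniformly through all sign combinations of $\alpha,\beta$, including the degenerate cases where $\Gamma$ opens into a half plane.
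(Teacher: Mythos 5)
The paper offers no proof of this lemma at all: it is introduced as a ``well-known characterization'' and used as a black box, so there is nothing in the source to compare your argument against step by step. Judged on its own, your algebraic core is correct and is the standard route: writing $H(s)=\frac{1+\beta G(s)}{(\beta-\alpha)\,(1+\alpha G(s))}$, the identity $\mathrm{Re}[H(i\omega)]=\frac{1+(\alpha+\beta)\mathrm{Re}[g]+\alpha\beta|g|^2}{(\beta-\alpha)|1+\alpha g|^2}$ and the completed square do recover $\Gamma$ exactly, and the Nyquist/winding-number bookkeeping you describe is the right device for promoting ``the curve avoids $\Gamma$'' plus ``the gain $\alpha$ stabilizes $G$'' to ``an entire component of the complement of the Nyquist curve lies in the stability region.''

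The genuine gap is the step you defer to ``a case analysis on the signs of $\alpha$ and $\beta$'': carried out, it does not deliver the stated dichotomy uniformly. Your own formula shows that $\mathrm{Re}[H(i\omega)]>0$ forces $G(i\omega)$ to stay \emph{outside} $\Gamma$ when $\alpha\beta/(\beta-\alpha)>0$ and \emph{inside} $\Gamma$ when $\alpha\beta/(\beta-\alpha)<0$. For $\alpha\beta>0$ this matches the lemma: e.g.\ for $0<\alpha<\beta$ the curve avoids the closed disk, the interior of $\Gamma$ is a connected set disjoint from the curve, and the winding count there is inherited from $-1/\alpha$. But for $\alpha<0<\beta$ the sign is negative, so the Nyquist curve must lie \emph{in the interior} of $\Gamma$ — and then the interior of $\Gamma$ cannot possibly be contained in the stability region, because it contains the curve itself; the correct graphical condition in that case is the reverse containment (curve strictly inside $\Gamma$, together with $G$ Hurwitz), exactly as in the three-case statement of the classical circle criterion. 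The case $\beta<0<\alpha$ fails symmetrically. So either the lemma is implicitly restricted to $\alpha\beta\geq 0$ (the paper's example only uses $\alpha=0$, where $\Gamma$ degenerates to a half-plane and your limit reading works), or ``stability region'' must be given a nonstandard meaning; your proposal asserts the dichotomy for all sign patterns and would break in the mixed-sign cases. Two smaller loose ends: the equivalence of ``$\exists\,\epsilon>0$ with $H(s-\epsilon)$ positive real'' with conditions (i)--(ii) requires the behaviour at infinity you only parenthesize (for strictly proper $H$ one needs $\lim_{\omega\to\infty}\omega^{2}\mathrm{Re}[H(i\omega)]>0$), and identifying the poles of $H$ with the closed-loop poles under gain $\alpha$ needs a no-unstable-cancellation caveat.
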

%  Such a characterization justifies the name of the criterion.

   The inequality (\ref{Hyp Sect Condition})
   is a condition weaker than the usual sector condition. However, 
   it is important to point out 
   that a graphical interpretation is still possible.
   For a fixed $M$, condition (\ref{Hyp Sect Condition}) is satisfied by all
   functions lying in a ``hyperbolic sector'' that contains the linear sector
   $[\alpha,\beta]$ (see Figure \ref{HyperSector}).
%   The variable $M$ can be interpreted as a parameter that tries
%   to ``enlarge'' the amplitude of the hyperbolic sector.
   \begin{figure}[hbt]
      \begin{center}
        \psfrag{y}{$y$}
        \psfrag{n(t,y)}{$n(t,y)$}
        \includegraphics[width=0.45\textwidth]{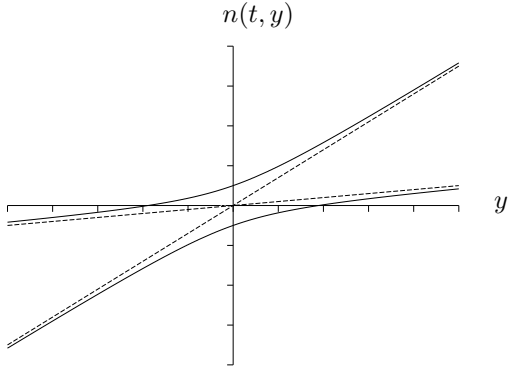}
      \end{center}
      \caption{A hyperbolic sector\label{HyperSector}}
   \end{figure}
%    It is straightforward to note that if (\ref{Hyp Sect Condition}) 
%    is verified for all $M>0$ 
%    the condition reduces to the usual sector condition and the theorem becomes
%    equivalent to the Circle Criterion.\\
   It is straightforward to note that in (\ref{Hyp Sect Condition}) if $M=0$ then 
   the condition reduces to the usual sector condition and the theorem becomes
   equivalent to the Circle Criterion.\\

   It is of interest to find a bound for the signal $|y(t)|$ when $x(t)$
   is assured to remain in the region $\Omega_{2M/\epsilon+\eta}$. 
   This can be achieved by the following result.
   \begin{lem}\label{BoundY}
      Let us consider a positive matrix $P$ that, without loss of generality, can
      be supposed symmetrical. The following optimization problem
% %      \begin{subequations}
%          \begin{align}
% 	    Y&=\max |y| \label{Y=max|y|} \nonumber \\
%             \mathrm{subject~~to} \nonumber \\
%             y&=Cx \nonumber \\
%             \frac{1}{2}x^T P x &\leq \gamma \nonumber
% 	 \end{align}
% %      \end{subequations}
      \begin{equation}
         \begin{array}{rl}
	    \max & |y| \label{Y=max|y|}\\
            \mbox{subject to} &
         \left\{
         \begin{array}{l}
            y=Cx \nonumber \\
            \frac{1}{2}x^T P x \leq \gamma
	 \end{array}
         \right.
       \end{array} 
      \end{equation}

      has a solution 
      \begin{equation}
          Y := \sqrt{2\gamma C P^{-1} C^T}
      \end{equation}
   \end{lem}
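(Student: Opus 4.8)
The plan is to convert the ellipsoidal feasible set into a Euclidean ball by a square-root change of coordinates, and then to recognize the objective as a linear functional whose maximum over a ball is given directly by the Cauchy--Schwarz inequality.

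First I would exploit that $P=P^T>0$ admits a unique symmetric positive-definite square root $P^{1/2}$, so that the substitution $z:=P^{1/2}x$ is a linear bijection with inverse $x=P^{-1/2}z$. Under this change of variables the constraint $\frac{1}{2}x^T P x\le\gamma$ becomes $\frac{1}{2}\|z\|^2\le\gamma$, i.e. $\|z\|\le\sqrt{2\gamma}$, which is a ball centred at the origin. The objective transforms into $y=Cx=(CP^{-1/2})z$, and since the plant is SISO the matrix $C$ is a row vector, so this is simply the inner product of $z$ with the fixed vector $v:=P^{-1/2}C^T$.

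Next I would apply Cauchy--Schwarz to obtain $|y|=|v^T z|\le\|v\|\,\|z\|\le\|v\|\sqrt{2\gamma}$, and compute $\|v\|^2=C P^{-1/2}P^{-1/2}C^T=C P^{-1}C^T$, which yields the upper bound $|y|\le\sqrt{2\gamma\,C P^{-1}C^T}=Y$. To finish I would verify that this bound is attained, so that it is genuinely the maximum and not merely an estimate: the choice $z^\star=\sqrt{2\gamma}\,v/\|v\|$ is feasible because $\|z^\star\|=\sqrt{2\gamma}$, and it makes Cauchy--Schwarz tight, giving $|y|=Y$; translating back through $x^\star=P^{-1/2}z^\star$ exhibits an admissible state realizing the value, so the maximum equals $Y$.

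The computation is entirely routine; the only points deserving a word of care are the existence and invertibility of $P^{1/2}$, which rest on $P$ being symmetric positive definite, and the attainment step, which confirms that the optimum is achieved on the boundary of the constraint set rather than merely bounded above. Here I am implicitly using that the feasible set is symmetric under $x\mapsto-x$, so that maximizing $|y|$ and maximizing the signed functional $v^T z$ deliver the same value.
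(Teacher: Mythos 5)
Your argument is correct and complete: the whitening substitution $z=P^{1/2}x$ turns the ellipsoid into a ball, Cauchy--Schwarz gives the bound $\sqrt{2\gamma\,CP^{-1}C^T}$, and your explicit maximizer $z^\star=\sqrt{2\gamma}\,v/\|v\|$ shows it is attained (with the trivial case $C=0$ handled separately). The paper states this lemma without any proof, so there is nothing to compare against; your derivation is the standard one and fills that omission cleanly.
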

%    \begin{proof}
%        For symmetrical reason we can remove the absolute value in (\ref{Y=max|y|}) and
%        then the thesis follows trivially by the Lagrange multiplier theorem.
%  	   Fixed a level curve of the function $V(x)= x^T P x$ , the maximum value 
% 	   of $y$ is obtained for $x$ satisfying the condition
% 	   \begin{equation}
% 	     \nabla V(x)=2 P x = \lambda C^T
% 		\Rightarrow x=\frac{1}{2} P^{-1} \alpha C^T
% 	   \end{equation}
% 	   The vectors parameterized by $\lambda$ are the points where $y$ is maximum for a
% 	   fixed value of $V(x)$.
% 	   $V(x)<\frac{2M}{\epsilon}$ leads to 
% 	   $\alpha < \sqrt{\frac{2M}{4\epsilon C P^{-1} C^T}}$.
% 	   So we finally have that
% 	   \begin{equation}
% 	         y(t)\leq \sqrt{\frac{2M}{\epsilon} C P^{-1} C^T}=: Y
% 	   \end{equation}
%    \end{proof}
%    ~\\
   \noindent We have the following corollary
   \begin{cor}
      Under the hypotheses of Theorem \ref{WeakCircle} and if $G(s)$ is strictly
      proper we have that the system output is bounded by
      $Y:= 2\sqrt{\frac{M}{\eps} C P^{-1} C^T}$. Thus is
      \begin{equation}
          \forall~\eta>0~\exists~t_{\eta}: t>t_{\eta}~\Rightarrow~|y(t)|< Y+\eta.
      \end{equation}
   \end{cor}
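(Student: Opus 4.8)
The plan is to obtain the output bound by chaining the two results already in hand: Theorem \ref{WeakCircle}, which after a finite time confines the state to an invariant sublevel set of the storage function $V(x) = \frac{1}{2} x^T P x$, and Lemma \ref{BoundY}, which converts any such confinement of $x$ into a pointwise bound on $|y|$. Since the output is the quantity $y = Cx$ constrained to an ellipsoid, the two statements are designed to compose directly, so the proof is essentially a substitution followed by a limiting argument.

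First I would isolate the role of the strict-properness hypothesis. If $G(s)$ is strictly proper then $D = 0$, so the feedthrough term disappears and the physical output satisfies $y(t) = Cx(t)$ exactly, with no contribution from $u = -n(t,y)$. This is precisely the constraint $y = Cx$ required by Lemma \ref{BoundY}; without it the output would carry a term $Du$ and the clean quadratic estimate would not apply. I would also note that in this case $\hat C = C/(1+\alpha D) = C$, so the matrix $C$ appearing in $Y$ is unambiguously the original output matrix used in Theorem \ref{WeakCircle}.

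Next I would invoke Theorem \ref{WeakCircle}: for each enlargement parameter there is a finite time after which the state lies in the invariant sublevel set $\{x \mid V(x) \le 2M/\epsilon\}$ of $V$. Feeding the corresponding value $\gamma = 2M/\epsilon$ into Lemma \ref{BoundY} gives $|y| \le \sqrt{2\gamma\, C P^{-1} C^T} = \sqrt{(4M/\epsilon)\, C P^{-1} C^T} = 2\sqrt{\frac{M}{\epsilon}\, C P^{-1} C^T}$, which is exactly $Y$. The one point demanding care here is the factor $\frac{1}{2}$ in the definition of $V$: the level $\gamma$ fed into Lemma \ref{BoundY} must be the $V$-level and not the raw value of $x^T P x$, and keeping this bookkeeping straight is what reproduces the constant $Y$ rather than a spurious $\sqrt{2}$ multiple.

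Finally, the $+\eta$ slack is handled by continuity. Theorem \ref{WeakCircle} guarantees invariance only of the slightly enlarged region, so the estimate produced by Lemma \ref{BoundY} is $\sqrt{2\gamma(\eta')\, C P^{-1} C^T}$, a continuous increasing function of the enlargement $\eta'$ that tends to $Y$ as $\eta' \to 0$. Thus, given any output tolerance $\eta > 0$, I would choose $\eta'$ small enough that this value is below $Y + \eta$ and then take $t_\eta$ to be the finite entry time supplied by the theorem for that $\eta'$. I expect the only genuinely delicate step to be this matching of tolerances, namely keeping the region-enlargement $\eta'$ distinct from the output slack $\eta$ and verifying that the square-root bound depends continuously on the level; everything else is direct substitution.
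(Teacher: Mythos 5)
Your proposal is correct and follows exactly the route the paper intends: the corollary is stated without an explicit proof, but it is plainly meant to be the composition of Theorem \ref{WeakCircle} (state confined to a sublevel set of $V$ after finite time) with Lemma \ref{BoundY} applied at the level $\gamma = 2M/\epsilon$, which is precisely what you do. Your attention to the factor-of-$2$ bookkeeping between $V(x)=\frac{1}{2}x^TPx$ and the set $\Omega$, to $D=0$ giving $y=Cx$, and to separating the region-enlargement tolerance from the output slack $\eta$ is careful and resolves the only points the paper leaves implicit.
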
~\\
   Summarizing, the Theorem \ref{WeakCircle} gives an analytical sufficient condition to
   prove the global boundedness of all orbits for a Lur'e system after a sufficiently long
   time. A bounding region can be easily evaluated solving a KYP problem and leads to
   a bound on the output if the function $G(s)$ is strictly proper.
   In addition, the well known Circle Criterion is a special case of the derived result.\\
%   [Some notes on the fact that ``our'' KYP problem is easier than an LMI? XXX]\\
% \begin{obs}
%    The Theorem \ref{WeakCircle}, in the limit case $\alpha=0$ and $\beta=+\infty$,
%    generalizes the Lemma \ref{FreqGenLemma}.
% \end{obs}

   \noindent For a given nonlinearity, the obtained bound $Y$ on $|y(t)|$
   depends on the choice of the parameters $\alpha, \beta$ and $M$.
   Since such a choice is not unique, in order to obtain the best bound offered by the
   Theorem \ref{WeakCircle}, we can consider the following optimization problem
    \small
     \begin{equation}\label{OptBoundY}
    \hspace{-0.2cm}
       \begin{array}{rl}
            \min &Y = \sqrt{\frac{2M}{\epsilon} C P^{-1} C^T}\\
            \mbox{subject to}&
            \left\{
            \begin{array}{l}
            \frac{1}{\beta - \alpha} (n(t,y)-\alpha y)(\beta y - n(t,y))> -M\\
            P{\hat A(\alpha,\beta)}+{\hat A(\alpha,\beta)}^TP = - L^T L
	                                           - \frac{\epsilon}{2} P\\
            P {\hat B(\alpha,\beta)} = \hat C(\alpha,\beta)^T - L^TJ \\ 
            J=\sqrt{2D}.
            \end{array} \right.
       \end{array}
     \end{equation}
    \normalsize
   Unfortunately, the problem (\ref{OptBoundY}) is difficult to tackle because it
   involves the nonlinear function $n(t,\cdot)$. 
   A plausible approach is to find an upper bound to the above problem that would be 
   pursued in future works.
   The knowledge of a bound on $y$ can be used to reduce the conservativeness of
   the Circle Criterion.
   In every case, the possibility of obtaining a bound for $\max |y(t)|$ (even if 
   suboptimal or non-optimal) suggests an idea to enlarge the set of nonlinearities 
   providing absolute stability which can be normally found applying the canonical 
   Circle Criterion.
   \begin{thm}\label{TwoStepsSequentialCircleCriterion}
      Let $S$ be a Lur'e system made of a SISO linear block with a strictly proper
      transfer function
      $G(s)$ and  nonlinear static block $n(t,y)$.
      Let $\alpha_1, \beta_1, \alpha_2, \beta_2$ be four scalars
      Assume that
      \renewcommand{\labelenumi}{(\textit{\alph{enumi}})}
      \begin{enumerate}
       \item $[G(s)\circlearrowright \alpha_i] + \frac{1}{\beta_i-\alpha_i}$ 
                                is $\epsilon_i$-Positive Real~~~for i=1,2
%	\item $[G(s)\circlearrowright \alpha]_2 + \frac{1}{\beta_2-\alpha_2}$ 
%                                is $\epsilon_2$-Positive Real 
%				(with same order of $G(s)$)
       \item there exists $M_1>0$ such that 
            \begin{equation} 
              M_1<\frac{1}{\beta_1 - \alpha_1} (n(t,y)-\alpha_1 y)(\beta_1 y - n(t,y))
                  \nonumber
            \end{equation}
       \item there exists $\eta_1>0$ such that 
\begin{align}
          & |y|<2\sqrt{\frac{M_1}{\epsilon_1} C P_1^{-1} C^T}+\eta_1
                  \Rightarrow\\
           &\quad  0\leq\frac{1}{\beta_2 - \alpha_2} (n(t,y)-\alpha_2 y)
                                  (\beta_2 y - n(t,y)) \nonumber
\end{align}
%        \item if $|y|<Y_i$
%                      $0\leq\frac{1}{\beta_i - \alpha_i} (n(t,y)-\alpha_i y)
%                                       (\beta_i y - n(t,y))$.
      \end{enumerate}
      Then we can conclude Global Asymptotical Stability for the system.
   \end{thm}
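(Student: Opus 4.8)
The plan is to chain the two criteria: Theorem~\ref{WeakCircle} applied to the first sector traps every trajectory, in finite time, inside a region carrying a guaranteed output bound, and on that region hypothesis~(\textit{c}) restores the ordinary sector condition, so that a standard Circle--Criterion Lyapunov argument for the second sector drives the state to the origin. First I would apply Theorem~\ref{WeakCircle} with the data $(\alpha_1,\beta_1,\epsilon_1,M_1)$: hypothesis~(\textit{a}) for $i=1$ is condition~(\ref{G(s) is PR}), and hypothesis~(\textit{b}) supplies the weak sector condition~(\ref{Hyp Sect Condition}) with $M=M_1$. Writing $P_1$ for the KYP solution of the first shifted realization and $V_1(x)=\tfrac12 x^T P_1 x$, the theorem produces a positively invariant set that every trajectory enters in finite time. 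Since $G(s)$ is strictly proper, the Corollary to Theorem~\ref{WeakCircle} (equivalently Lemma~\ref{BoundY}) turns this into an output bound; taking its tolerance equal to $\eta_1$ yields a finite time $T$ with $|y(t)|<Y_1+\eta_1$ for all $t>T$, where $Y_1:=2\sqrt{\tfrac{M_1}{\epsilon_1}CP_1^{-1}C^T}$.

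I would then work inside the slab $\{\,|Cx|<Y_1+\eta_1\,\}$. By hypothesis~(\textit{c}), on this slab the nonlinearity obeys the \emph{ordinary} sector condition for $[\alpha_2,\beta_2]$, namely $\tfrac{1}{\beta_2-\alpha_2}(n(t,y)-\alpha_2 y)(\beta_2 y-n(t,y))\ge 0$. Invoking hypothesis~(\textit{a}) for $i=2$, the KYP conditions attached to $\epsilon_2$-positive-realness of the second shifted realization furnish $P_2=P_2^T>0$; setting $V_2(x)=\tfrac12 x^T P_2 x$ and repeating the computation in the proof of Lemma~\ref{FreqGenLemma} verbatim --- with the nonnegative product $\tfrac{1}{\beta_2-\alpha_2}(n-\alpha_2 y)(\beta_2 y-n)$ now playing the role of $n\,y$, i.e.\ the case $M=0$ --- gives, at every state lying in the slab,
\begin{equation}
  \dot V_2 \le -\tfrac{\epsilon_2}{2}V_2 -\tfrac{1}{\beta_2-\alpha_2}(n-\alpha_2 y)(\beta_2 y-n) \le -\tfrac{\epsilon_2}{2}V_2 ,
  \nonumber
\end{equation}
which is strictly negative whenever $x\ne 0$ because $P_2>0$.

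Finally I would stitch the two estimates together. For $t>T$ the output remains in the slab, so $\dot V_2\le -\tfrac{\epsilon_2}{2}V_2$ holds along the whole trajectory on $(T,\infty)$; a comparison argument then gives $V_2(x(t))\le V_2(x(T))\,e^{-\epsilon_2(t-T)/2}\to 0$, hence $x(t)\to 0$ from \emph{every} initial condition, which is global attractivity. For Lyapunov stability of the origin --- an equilibrium, since evaluating~(\textit{c}) at $y=0$ forces $n(t,0)=0$ --- I would observe that every sufficiently small sublevel set $\{V_2\le c\}$ lies inside the slab, where $\dot V_2<0$ renders it positively invariant; this is the usual $\delta$--$\rho$ estimate. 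Attractivity together with stability is Global Asymptotical Stability.

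The main obstacle is making the two Lyapunov functions cooperate: convergence to the origin is governed by $V_2$, whereas the decay $\dot V_2\le -\tfrac{\epsilon_2}{2}V_2$ is only available where~(\textit{c}) is active, that is, inside the output slab. The delicate point is therefore that the slab is not merely reached but never left afterwards, which is precisely the \emph{positive invariance} asserted in Theorem~\ref{WeakCircle}; strict properness of $G(s)$ is what converts the state bound of Lemma~\ref{BoundY} into the output bound feeding~(\textit{c}), and the slack $\eta_1$ is absorbed by choosing the Corollary's tolerance no larger than $\eta_1$. It is worth recording the standing assumption $\alpha_i\ne\beta_i$, needed for the zero-shift $\tfrac{1}{\beta_i-\alpha_i}$ to be well defined.
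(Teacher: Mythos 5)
Your proposal is correct and follows essentially the same route as the paper: first invoke Theorem~\ref{WeakCircle} with $(\alpha_1,\beta_1,M_1)$ and its Corollary to trap $|y|$ below $Y_1+\eta_1$ in finite time, then use hypothesis~(\textit{c}) to recover the ordinary sector condition for $[\alpha_2,\beta_2]$ on that region and conclude by the standard Circle Criterion. The only difference is that you expand the final step into an explicit KYP/Lyapunov decay estimate (and note the invariance and $n(t,0)=0$ details), where the paper simply cites the standard Circle Criterion.
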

   \begin{proof}
       Using (\textit{a}) and (\textit{b}) from the Corollary \ref{WeakCircle} 
       follows that there 
       exists a time $t_1$ after which the system output $y$ is such that $|y|$ will be
       bounded by $Y_1+\eta_1$ with $Y_1:=2\sqrt{\frac{M_1}{\epsilon_1} C P_1^{-1} C^T}$.
       For all $t>t_1$ we will have that the system ``explores'' the nonlinearity
       $n(t,\cdot)$
       only in the linear sector $[\alpha_2, \beta_2]$. 
       Using (\textit{c}) it follows from the standard Circle Criterion that the system is
       absolutely stable.
   \end{proof}
   ~\\
   The previous theorem can be immediately extended to cover a more general case.
   \begin{thm}\label{KStepsSequentialCircleCriterion}
      Let $S$ be a Lur'e system made of a SISO linear block with a strictly proper
      transfer function
      $G(s)$ and  nonlinear static block $n(t,y)$.
      Let $\alpha_1 ... \alpha_K, \beta_1, ... \beta_K$ be 2$K$ scalars.
      Let us assume that
      \begin{itemize}
       \item $[G(s)\circlearrowright \alpha_i] + \frac{1}{\beta_i-\alpha_i}$ 
                                is $\epsilon_i$-Positive Real~~~for $i=1,K$
%	\item $[G(s)\circlearrowright \alpha]_2 + \frac{1}{\beta_2-\alpha_2}$ 
%                                is $\epsilon_2$-Positive Real 
%				(with same order of $G(s)$)
       \item there exists $M_1>0:\quad-M_1<\frac{1}{\beta_1 - \alpha_1} (n(t,y)-\alpha_1 y)
                                  (\beta_1 y - n(t,y))$
       \item there exists $\eta_i>0: |y|<2\sqrt{\frac{M_i}{\epsilon_i} 
                                                C P_i^{-1}  C^T}+\eta_i$ 
                  $\Rightarrow -M_{i+1}\leq\frac{1}{\beta_{i+1} - \alpha_{i+1}}
		       (n(t,y)-\alpha_{i+1} y)(\beta_{i+1} y - n(t,y))~\mathrm{for~all}~
             i=1...K-2$ 
	\item there exists $\eta_{K-1}>0$ such that
           \begin{align}
            &|y|<2\sqrt{\frac{M_{K-1}}{\epsilon_{K-1}} 
                               C P_{K-1}^{-1}  C^T}+\eta_{K-1} \nonumber \\
                  &~\mbox{implies}~\nonumber\\
             &0\leq\frac{1}{\beta_{K} - \alpha_{K}}
		       (n(t,y)-\alpha_{K} y)(\beta_{K} y - n(t,y))\nonumber
           \end{align}
      \end{itemize}
      Then we can conclude Global Asymptotical Stability for the system.
   \end{thm}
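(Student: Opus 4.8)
The plan is to argue by induction on the stage index, iterating the reasoning of Theorem \ref{TwoStepsSequentialCircleCriterion}: at each stage I will invoke the Weak Circle Criterion (Theorem \ref{WeakCircle}) together with its corollary to shrink the region in which the output $y(t)$ is guaranteed to lie, and then feed the resulting output bound into the next sector condition, continuing until the final condition becomes an ordinary ($M=0$) sector condition handled by the standard Circle Criterion.

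For the base case I would use the first Positive Real hypothesis ($i=1$) together with $-M_1 < \frac{1}{\beta_1-\alpha_1}(n(t,y)-\alpha_1 y)(\beta_1 y - n(t,y))$. The Corollary to Theorem \ref{WeakCircle} then produces a finite time $t_1$ after which $|y(t)| < Y_1 + \eta_1$, where $Y_1 := 2\sqrt{\frac{M_1}{\epsilon_1} C P_1^{-1} C^T}$, and the sublevel set $\Omega_1$ of $\frac{1}{2}x^T P_1 x$ is positively invariant. For the inductive step I would assume that after a finite time $t_i$ the trajectory lies in $\Omega_i$ and obeys $|y(t)| < Y_i + \eta_i$; the $i$-th chaining hypothesis then guarantees that along the trajectory $-M_{i+1} \leq \frac{1}{\beta_{i+1}-\alpha_{i+1}}(n(t,y)-\alpha_{i+1}y)(\beta_{i+1}y - n(t,y))$, furnishing the lower bound required to invoke Theorem \ref{WeakCircle} with the parameters of stage $i+1$. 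Applying that theorem and its corollary yields a finite time $t_{i+1} > t_i$ after which $|y(t)| < Y_{i+1} + \eta_{i+1}$. Iterating for $i = 1, \dots, K-2$ confines the trajectory, after the finite time $t_{K-1}$, to the region $|y(t)| < Y_{K-1} + \eta_{K-1}$.

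In this last region the final hypothesis gives $0 \leq \frac{1}{\beta_K-\alpha_K}(n(t,y)-\alpha_K y)(\beta_K y - n(t,y))$, i.e. the genuine sector condition obtained by setting $M=0$, as remarked after Theorem \ref{WeakCircle}. Combined with the $K$-th Positive Real hypothesis, the standard Circle Criterion then establishes asymptotic stability of the origin. Since the preceding stages show that every trajectory reaches this final region in finite time, for every initial condition, I can upgrade this to Global Asymptotic Stability.

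The hard part will be the bookkeeping that makes each application of Theorem \ref{WeakCircle} legitimate: for $i \geq 2$ the sector condition of stage $i$ is valid only for bounded $|y|$, whereas Theorem \ref{WeakCircle} is phrased for a globally valid condition. The point I would check carefully is that the sublevel sets are nested along the flow, so that once the trajectory has entered $\Omega_i$ it remains in $\bigcap_{j\leq i}\Omega_j$, a positively invariant set on which the inclusion $\Omega_i \subseteq \{|y| < Y_i + \eta_i\}$ forces the stage-$(i+1)$ sector condition to hold along the trajectory. The Lyapunov estimate $\dot V_{i+1} < M_{i+1} - \frac{\epsilon_{i+1}}{2}V_{i+1}$ of Lemma \ref{FreqGenLemma} then holds for all $t > t_i$, so the finite-time and invariance argument of that lemma carries over verbatim to the restricted domain.
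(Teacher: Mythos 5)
Your proposal is correct and follows essentially the same route as the paper, which simply declares the result ``a trivial extension'' of Theorem \ref{TwoStepsSequentialCircleCriterion}: you iterate the two-step argument inductively, using the Corollary of Theorem \ref{WeakCircle} at each stage to bound $|y|$ and thereby activate the next sector condition, closing with the standard Circle Criterion at stage $K$. In fact you supply more care than the paper does, in particular by noting that the stage-$(i+1)$ sector condition holds only on the positively invariant region reached at stage $i$, so the Lyapunov argument of Lemma \ref{FreqGenLemma} must be run on that restricted domain --- a point the paper leaves implicit.
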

   \begin{proof}
      It is a trivial extension of the previous theorem.
   \end{proof}
    
   \end{section}
   
   \begin{section}{A Numerical Example}\label{A Numerical Example}
      Let us consider the stable and minimum phase transfer function
      \begin{equation}
         G(s):=0.25\frac{(\frac{s}{5}+1)(\frac{s}{5}+1)(\frac{s}{5}+1)}
	        {(\frac{s}{20}+1)(\frac{s}{21}+1)(\frac{s}{22}+1)(\frac{s}{23}+1)}
      \end{equation}
       modeling a physical system of interest.
%       \begin{figure}[hb]
%          \begin{center}
%            \includegraphics[width=0.5\textwidth]{ExNyquist}
%          \end{center}
%          \caption{Nyquist Plot of $G(s)$\label{ExNyquist}}
%       \end{figure}
      By the Nyquist Criterion, we obtain that $G(s)\circlearrowright K$ is stable
      for every $K>0$. Suppose the task is to to design a simple proportional gain to improve
      the system performances.
      However, the actuator realizing the control law is subject to saturation.
      A noise $\xi(t)$ acting on the input of the plant is also present.
      This scenario is depicted in Figure \ref{ExBlockDiag}.
      \begin{figure}[hb]
         \begin{center}
           \psfrag{csi}{~~$\xi$}
           \includegraphics[width=0.4\textwidth]{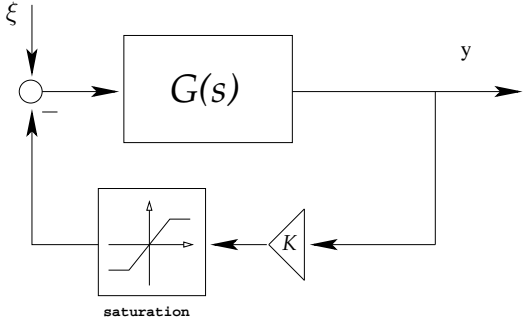}
         \end{center}
         \caption{Nyquist Plot of $G(s)$\label{ExBlockDiag}}
      \end{figure}
      We define the saturation function in the following way
      \begin{equation}
         \sat(y)=\left\{
	                 \begin{array}{l}
	                   -1~\mathrm{if}~y\leq-1\\
			   y~\mathrm{if}~|y|<1\\
			   1 ~\mathrm{if}~y\geq 1.\\
	                 \end{array}
	         \right.
      \end{equation}	
      A more sophisticated model for the controlled system is a Lur'e model $G(s)$
      and a static nonlinear feedback 
      \begin{equation}
         n_K(t,y):=S~\sat\left(\frac{Ky}{S}\right)
      \end{equation}
      where $S$ is the actuator saturation level.
      In this situation, the Nyquist Criterion can be applied only to
      estabilish local stability of the origin for any $K>0$.
      In order to obtain global results, it is possible to use the Circle
      Criterion.
      Since the nonlinearity $n_K(t,\cdot)\in\sect[0,K]$ and $G(s)$ is a stable transfer
      function, the application of the Circle Criterion to our problem is immediate.
      It can assure global asymptotical stability for all nonlinearities in
      $\sect [0,K_{cc}]$ where $K_{cc}:=-1/\min \{\mathrm{Re}{G(i\omega)} \}$.
      In our particular case we find that $K_{cc}\simeq 18.75$.
      However, for $K>K_{cc}$ the Circle Criterion does not provide any conclusion.\\
      Conversely, consider $n_K(t,y)=S\sat(Ky/S)$ for $K>K_{cc}$,  choose $\alpha=0$ and
      $\beta=3.2$. The transfer function $G(s)$ + $1/\beta$ is
      $\eps$-Positive Real with $\eps>0.8$. 
      Also note that
      \begin{equation}
         -M < \frac{1}{\beta} n_K (\beta y - n_K) < \frac{S^2}{\beta} 
	          \Rightarrow M > \frac{S^2}{\beta}.
      \end{equation}
      Thus, we can apply Theorem \ref{WeakCircle}.
      Solving the related $KYP$-problem, we find an upper bound for $y(t)$ given by
      \begin{equation}\label{ExYEst}
          y(t)\leq~Y:=2S\sqrt{\frac{CP^{-1}C^T}{\beta\eps}}\simeq 32.5~S
      \end{equation}
      which depends on the value of the parameter $S$. From (\ref{ExYEst})
      we  conclude that the system eventually enters a bounded region.
      The nonlinearity ``explored'' in this region lies in the sector
      $[\frac{S}{Y},+\infty]\simeq[0.03,+\infty]$. 
      Since this sector satisfies the Circle Condition we can conclude global asymptotical
      stability of the system for every value $K>0$.
%       In other words we have simply applied step by step the Theorem
%       \ref{TwoStepsSequentialCircleCriterion}.
      Thus we can assert that a simple 
      relay can be used as a global controller for the system.
%       So far, the effect of noise $\csi(t)$ has been neglected. 
      The developed results also allow us to take into 
      account bounded noise effects. Suppose the disturbance enters the interconnection
      in the following manner
      \begin{equation}
         n_{+\infty}(t,y):= \sgn(y)+\csi(t).
      \end{equation}
      Let us assume that $|\xi(t)|<aS$.
      In this case (\ref{Hyp Sect Condition}) is equivalent to
      \begin{align}
         -M < \frac{1}{\beta} n_{+\infty} (\beta y - n_{+\infty}) < \frac{S^2(1+a)^2}{\beta}\\
	          \mbox{thus}~~M > \frac{S^2(1+a)^2}{\beta}.
      \end{align}
      Thus, we obtain
      \begin{equation}\label{ExYEstNoise}
          y(t)\leq~Y:=S\sqrt{\frac{CP^{-1}C^T}{\beta\eps}}\simeq 32.5(1+a)S.
      \end{equation}
      The nonlinearity, in the noisy case, gets explored only in the sector 
      $[\frac{S(1-a)}{Y}, +\infty]\simeq[0.03 \frac{1-a}{1+a}, +\infty]$.
      We numerically find that for $a<0.2$ the circle condition is satisfied and 
      the system is globally asymptotically stable.
      Summarizing, we have proved that the system can be 
      successfully controlled by any linear static controller even in presence of
      saturation on the actuators. 
      Note that such an assert cannot be made using the Circle Criterion.
      In addition, it was shown that in presence of a  low  bounded 
      noise the relay controller performs a complete rejection of the disturbance.
   \end{section}

   \begin{section}{Conclusions}\label{Conclusions}
       We have presented a relaxed form of the Circle Criterion in order to prove
       a global boundedness of the trajectories of a Lur'e System. The result generalizes
       the classical Circle Criterion considering hyperbolic sectors instead of linear ones.
       This tool can be employed to extend the class of nonlinearities which the Circle
       Criterion can conclude Absolute Stability.
   \end{section}
   \small 
\end{document}